\tikzstyle{bag} = [align=center]
\theoremstyle{definition}
\theoremstyle{remark}
\newtheorem{example}{Example}[section]
\theoremstyle{plain}
\newtheorem{theorem}{Theorem}[section]
\newtheorem{lemma}[theorem]{Lemma}
\newtheorem{corollary}[theorem]{Corollary}
\newcommand{\tr}{^{\prime}}
\newcommand{\oset}[2]{%
  {\mathop{#2}\limits^{\vbox to -.1\ex@{\kern-\tw@\ex@
   \hbox{\scriptsize #1}\vss}}}}
\def\keywords#1{{\vskip4pt
\noindent
\hbox to50.5pt{KEYWORDS:\quad\hss}\vtop{\advance \hsize by -59.5pt
\leftskip=28pt \rightskip=0pt
\noindent\ignorespaces#1\vskip8pt}}}
\newcommand*\xbar[1]{%
  \hbox{%
    \vbox{%
      \hrule height 0.5pt 
      \kern0.5ex
      \hbox{%
        \kern-0.25em
        \ensuremath{#1}%
        \kern-0.1em
      }%
    }%
  }%
}
\let\runauthor\@author
\let\runtitle 
\date{}
\begin{document}

\title{On the maximum likelihood estimation in general log-linear models}

\author{Anna Klimova \\
{\small{National Center for Tumor Diseases (NCT), Partner Site Dresden, and}}\\
{\small{Institute for  Medical Informatics and Biometry,}}\\ 
{\small{Technical University, Dresden, Germany} }\\
{\small \texttt{anna.klimova@nct-dresden.de}}\\
{}\\
\and
Matthias Kuhn \\
{\small{Institute for  Medical Informatics and Biometry,}}\\ 
{\small{Technical University, Dresden, Germany} }\\
{\small \texttt{matthias.kuhn@tu-dresden.de}}\\
{}
}

\maketitle

\begin{abstract}
General log-linear models specified by non-negative integer design matrices have a potentially wide range of applications, although using  models without the genuine overall effect,  that is,  ones which cannot be reparameterized to include a normalizing constant,  is still rare.  The log-linear models without the overall effect arise naturally in practice,  and can be handled in a similar manner to models with the overall effect.  A novel iterative scaling procedure for the MLE computation under such models is proposed, and its convergence is proved.  The results are illustrated using data from a recent clinical study.
\end{abstract}

\begin{keywords}
{Bregman divergence, contingency table,  iterative proportional fitting, maximum likelihood estimate, overall effect, relational model}
\end{keywords}

\baselineskip=18pt

\section{Introduction}

The general log-linear models for contingency tables are generated by design matrices with non-negative integer entries and are a generalization of the hierarchical log-linear models \citep[cf.][]{Haber}.  Let $\mathcal{I}$ be  a complete or incomplete contingency table, and $\mathbf{A}$ be  a design matrix. A general log-linear model is the set of distributions on $\mathcal{I}$ which satisfy: 
\begin{equation}\label{mainModel}
\log \boldsymbol{\delta} = \mathbf{A}\tr \boldsymbol{\beta},
\end{equation}
where $\boldsymbol{\delta}$ is a vector of parameters  associated with cells in $\mathcal{I}$, and $\boldsymbol \beta$ are log-linear parameters. Among special cases of (\ref{mainModel}) are conventional log-linear models \citep*{BFH} and relational models \citep*{KRD11}.  In the first case, $\mathbf{A}$  comprises the indicators of cylinder sets of marginal distributions of $\mathcal{I}$, and the components of $\boldsymbol \beta$ are associated to these sets. In the second,  the rows of $\mathbf{A}$ are indicators of arbitrary subsets of $\mathcal{I}$, and $\boldsymbol \beta$ are the subset parameters. 

The structure of a hierarchical log-linear model presumes that the row space of the design matrix contains a vector of $1$'s, and, therefore, implies the existence of a genuine common effect present in all cells. This effect is conventionally included in many statistical models in the form of intercept, and plays the role of a normalizing constant,  to ensure that the probability distributions sum to $1$.  
The presence of the overall effect is not compulsory though in the framework of relational models proposed in   \cite{KRD11}. In certain cases, the overall effect assumption may conflict with the model structure, by partially destroying it after the overall effect is added to the model \citep{KRoveff}. Because the overall effect assumption, explicit or hidden, is almost automatic, no systematic treatment of general log-linear models without the overall effect,  with the exception of relational models,  has been given so far. A more careful look at their properties is necessary, because, as already was established for relational models, the properties of models with and without the overall effect are intrinsically different, far beyond the proofs being more laborious. 

The MLE computation under  general log-linear models can, of course,  be performed using the Newton-Raphson's or algorithms  for convex optimization \citep[cf.][]{BertsekasNLP, AitchSilvey60, EvansForcina11}. Although these algorithms are usually fast, they become unstable on the boundary of a parameter space \citep{Agresti2002}. Another, more stable, option is to employ  a log-linear model-class specific algorithms, such as the iterative proportional fitting (IPF) or its generalizations, although some of them are only applicable where the design matrix consists of only 0-1 entries and the overall-effect assumption is met \citep*[cf.][]{BFH, Agresti2002, Huang2010}. Generalized Iterative Scaling (GIS) of \cite{DarrochRatcliff}  was originally proposed for the MLE computation under log-linear models with non-negative, but not necessarily 0-1, model matrices, although the presence of the overall effect was critical in the proof of convergence. Improved Iterative Scaling (IIS) was developed by \cite*{DDL1997} as a generalization of GIS in the context of feature selection. Although, formally, IIS did not rely on the assumption of constant sum of features, it does not converge to the MLE in the no-overall effect case \citep{KRipf1}.  Generalized iterative proportional fitting (G-IPF) of \cite{KRipf1} was specifically designed to handle relational models without the overall effect,  but its convergence relies on the model matrix having exclusively 0-1 entries.  In this paper, we aim to develop an algorithm which would combine the merits of GIS and G-IPF and can be used to compute the MLEs for general log-linear models without the overall effect. 

In Section \ref{sectionNotation}, the relevant background on the ML estimation is summarized, with the emphasis on the differences between the models with and without the genuine overall effect.  Section \ref{MLEcompute} turns to the MLE computation via the iterative proportional fitting approach and presents the main contribution of this paper: a proof of convergence of the generalized  iterative scaling of \cite{DarrochRatcliff} for general log-linear models without the overall effect, either for probability or intensity distributions. A two-step procedure for fitting the MLE is then described and illustrated using a number of examples.   Finally,  the framework is applied to the data collected in a clinical study.

\section{Background and notation}\label{sectionNotation}

Let $Y_1, \dots, Y_K$ be  random variables taking values in finite sets $\mathcal{Y}_1, \dots, \mathcal{Y}_K$, respectively. A non-empty set $\mathcal{I}  \subseteq \mathcal{Y}_1 \times  \dots \times \mathcal{Y}_K$  will be referred to as a table, and each element of $\mathcal{I}$ will be called a cell. The order of cells in $\mathcal{I}$ is fixed, and each cell is indexed with $i$, for $i \in \mathcal{I}$.   Assume that the population distribution is parameterized by a strictly positive $\boldsymbol \delta =(\delta_i)_{i = 1}^I$, where $I = |\mathcal{I}|$ is the total number of cells.   In this paper, $\boldsymbol \delta$ is either a probability vector, that is, $\delta_i \equiv p_i \in (0,1)$, with $\sum_{i =1}^I p_i = 1$, or a vector of intensities, $\delta_i \equiv \lambda_i > 0$, for all $i \in \mathcal{I}$. The set of all strictly positive distributions on $\mathcal{I}$ will be denoted by $\mathcal{P}$. 

Let $\mathbf{A} = (a_{ji}) \in \mathbb{Z}_{\geq 0}^{J \times I}$ be a $J \times I$ matrix of rank $J$ with no zero columns. The general log-linear model generated by $\mathbf{A}$, denoted $\mathcal{G}(\mathbf{A})$, is the subset of $\mathcal{P}$ which satisfies:
\begin{equation} \label{PMmatr}
\mathcal{G}(\mathbf{A}) = \left \{ \boldsymbol \delta \in \mathcal{P}: \,\, \mbox{log } \boldsymbol \delta = \mathbf{A}'\boldsymbol \beta, \, \mbox{for some } \, \boldsymbol \beta \in \mathbb{R}^J \right\}.
\end{equation}
Here, the components of $\boldsymbol \beta$ are the log-linear parameters of the model. In the conventional set-up, $\mathcal{G}(\mathbf{A})$ has a parameter, called the overall effect, or the normalization constant, which is common to all cells in $\mathcal{I}$. In such case, the model can be reparametrized to have $\boldsymbol 1\tr$ in the row span of the  model matrix. This manuscript will mainly focus on models without the overall effect, that is, when $\boldsymbol 1\tr \notin rowspan (\mathbf{A})$.  A dual representation of  log-linear model (\ref{PMmatr}) can be obtained from an integer matrix, $\mathbf{D}$, whose rows are a basis of $Ker(\mathbf{A})$:
\begin{equation}\label{dualF2d}
\mathcal{G}(\mathbf{A}) = \{ {\boldsymbol \delta} \in \mathcal{P}: \,\, \mathbf{D} \mbox{log }\boldsymbol \delta = \boldsymbol 0\}.
\end{equation}

A general log-linear model, whether with or without the overall effect, is also an exponential (multiplicative) family of distributions:
\begin{equation} \label{RMexpF}
\mathcal{G}(\mathbf{A}) = \{ \boldsymbol \delta \in \mathcal{P}: \,\,\delta_i = \exp\{\sum_{j=1}^J  a_{ji} \beta_j\} = \prod_{j=1}^J \theta_j^{a_{ji}}, \, i \in \mathcal{I}, \, \mbox{ for some } \, \boldsymbol \theta \in \mathbb{R}^J_{>0}\},
\end{equation}
where $\theta_j = \mbox{exp  }(\beta_j)$, and $\boldsymbol \theta =(\theta_1, \dots, \theta_J) \in \mathbb{R}_{>0}^J$ denotes the vector of (multiplicative) parameters associated to columns of $\mathbf{A}$.  In the case of intensities, whether or not $\boldsymbol 1\tr \in rowspan (\mathbf{A})$,  (\ref{RMexpF}) is a regular exponential family of order $J$. In the case of probabilities, the additional normalization constraint, $\boldsymbol 1\tr \boldsymbol \delta = 1$ is imposed, so  (\ref{RMexpF}) is an exponential family of order $J-1$.  When $\boldsymbol 1\tr \in rowspan (\mathbf{A})$, the family for probabilities is regular, and, otherwise, it is curved \citep[see e.g.][p.229]{Rudin,KassVos}. As shown in the next section, the properties of maximum likelihood estimators under $\mathcal{G}(\mathbf{A})$ depend very strongly of whether the model is a regular or curved exponential family.


Let $\mathbf{Y} = (Y_1, \dots, Y_K)$ be a random variable that has either a multivariate Poisson distribution $Pois(\boldsymbol \lambda)$ or a multinomial distribution $Mult(N, \boldsymbol p)$, and $\boldsymbol y$ be a realization of $\mathbf{Y}$. In the Poisson case, set $\boldsymbol q = 
\boldsymbol y$, and, in the multinomial case, denote $\boldsymbol q = 
\boldsymbol y/ N$.  Assume that the MLE of the cell parameters  under the model $\mathcal{G}(\mathbf{A})$ exists. Then, the MLE $\hat{\boldsymbol \delta}_{\boldsymbol q}$ is the unique solution to the system of equations \citep[cf.][]{KRD11}:
\begin{align}\label{MLEsys}
&\mathbf{A}\boldsymbol{\delta} = \gamma \mathbf{A} \boldsymbol q, \quad \mathbf{D} \mbox{log } \boldsymbol{\delta} = \boldsymbol 0, \quad \mbox{and only for } \boldsymbol \delta \equiv \boldsymbol p: \,\,\boldsymbol 1'\boldsymbol \delta = 1. 
\end{align}
If $\mathcal{G}(\mathbf{A})$ is a regular exponential family, the Birch theorem holds, namely, the sufficient   statistics,  $\mathbf{A} {\boldsymbol q}$, are preserved by the MLE, implying $\gamma \equiv 1$  \citep[cf.][]{BirchMLE, Andersen74, Haberman}.    When $\mathcal{G}(\mathbf{A})$ is a curved exponential family (in our setting, it would be a model for probabilities without the overall effect), $\gamma = \gamma(\boldsymbol q)$,  depends on the data. In this case, a more general version of Birch theorem holds, and $\gamma$ is referred to as the adjustment factor of the MLE \citep{KRD11, Forcina2019}.  An illustration is given next.

\begin{example} \label{ExampleVaccine}
Consider the model under which the cell probabilities $\boldsymbol  p = (p_1, p_2, p_3, p_4)\tr$ are parameterized according to the tree in Figure \ref{TreeVaccine}, where $\theta_0, \theta_1 \in (0,1)$, $\theta_1= 1-\theta_0$:
$$p_{1} = \theta_0^3, \,\,  p_{2} = \theta_0^2 \theta_1, \,\, p_{3} = \theta_0\theta_1, \,\, p_{4} = \theta_1.$$
Equivalently, the model is can be expressed in the log-linear form (\ref{PMmatr}), with the model matrx
\begin{equation*}
\mathbf{A} = \left( 
\begin{array}{ccccc}
3&2&1&0\\
0&1&1&1\\
\end{array}
\right).
\end{equation*}
Because $\boldsymbol 1 \tr \notin rowspan(\mathbf{A})$, the model does not have the genuine overall effect. 
Let $\boldsymbol y = (y_1, y_2, y_3, y_4)\tr$ be a realization of $Mult(N, \boldsymbol p)$, where  $N = \sum_{i = 1}^4 y_i$. The kernel of the log-likelihood function is equal to
$L(\boldsymbol \theta \mid \boldsymbol y) = (3y_{1}+2y_{2} +  y_{3})\mbox{log } \theta_0 +  (y_{2}+y_{3} +  y_{4}) \mbox{log } \theta_1$. One can show that it 
is maximized by   $\hat{\theta}_0 = z_1/ z_3$, $\hat{\theta}_1 = z_2/ z_3$, where  
$z_1 = 3y_1 + 2 y_2 + y_3$, $z_2 =     y_2+y_3 + y_4$, and $z_3 = z_1 + z_2 = 3y_1 + 3 y_2 + 2 y_3 + y_4$. Accordingly, the MLE $\hat{\boldsymbol p}$ is equal to
$\hat{\boldsymbol p} = \left ({z_1^3}/{z_3^3}, {z_1^2z_2}/{z_3^3}, {z_1 z_2}/{z_3^2},  {z_2}/{z_3}\right)\tr.$
One can check that $\mathbf{A} \hat{\boldsymbol p} = \gamma_{y} \mathbf{A} (\boldsymbol y/N)$, so the adjustment factor $\gamma_y =  N \cdot (z_1^2 + z_1z_3 + z_3^2)/{z_3^3}.$  
\qed
\end{example}

\begin{figure}
\begin{center}
\includegraphics[scale=0.6]{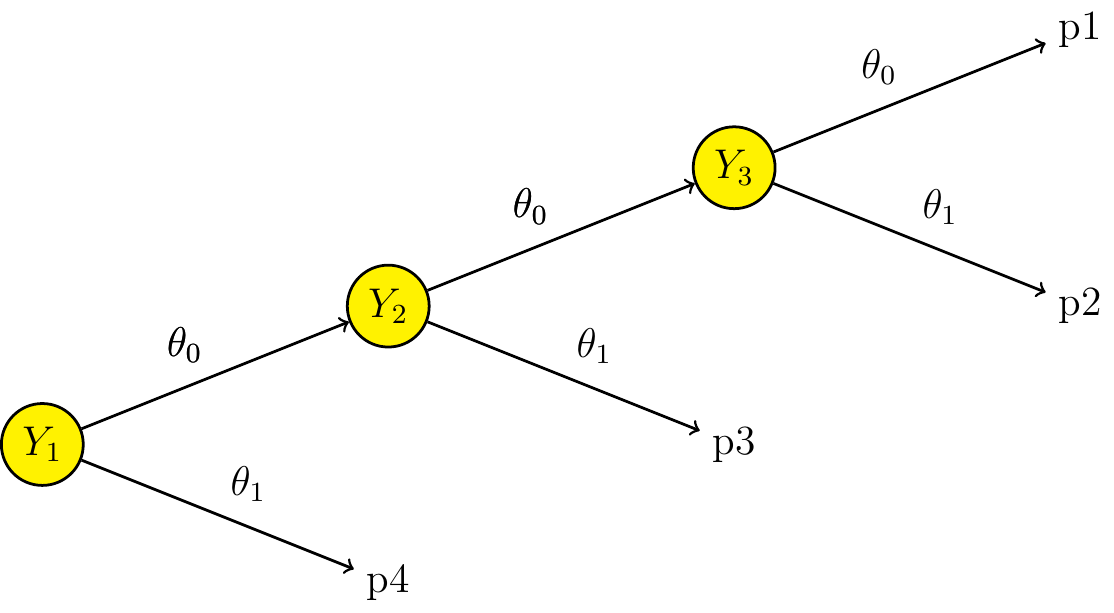}
\end{center}
\caption{Tree representation for the model in Example \ref{ExampleVaccine}.}
\label{TreeVaccine}
\end{figure}

Log-linear models for which no closed-form MLE exists occur  in practice very often. For such models, the MLE is approximated using a numeric procedure. In the next section, an iterative scaling algorithm for MLE computation is presented and its convergence is proved. 

\section{Generalized Iterative Scaling for the MLE computation}\label{MLEcompute}

Let the data $\boldsymbol y$ come from either $Pois(\boldsymbol \lambda)$ or  $Mult(N, \boldsymbol p)$, and $\mathcal{G}(\mathbf{A})$ be a general log-linear model. Assume that the MLE, $\hat{\boldsymbol \lambda}$ or $\hat{\boldsymbol p}$, respectively, under $\mathcal{G}(\mathbf{A})$, given $\boldsymbol y$, exists. In this paper, we take the log-linear model-specific approach to the MLE calculation and focus on the generalized iterative scaling (GIS) algorithm proposed by \cite{DarrochRatcliff}

Let $\mathbf{A} = (a_{ji})$ be a non-negative real $J \times I$ matrix such that the sum of its row vectors is the vector of all $1$'s:
\begin{equation}\label{star}
 (\sum_{j = 1}^J a_{j1},\dots, \sum_{j = 1}^J a_{jI}) = \boldsymbol 1.
\end{equation}
Equivalently,  all column sums of $\mathbf{A}$ are equal to $1$. For a given $\boldsymbol k \in \mathbb{R}^J_{> 0}$ and $\boldsymbol \pi \geq \boldsymbol 0$, $\boldsymbol 1\tr \boldsymbol \pi \leq 1$, GIS aims to find a probability distribution $\boldsymbol p \in \mathcal{P}$ which satisfies
\begin{equation} \label{DRsetup}
p_i = \pi_i \mu \prod_{j=1}^J   \theta_j^{a_{ji}}, \,\, \mathbf{A} \boldsymbol p = \boldsymbol k. 
\end{equation}
By Lemma 4 in  \citep{DarrochRatcliff}, the constraints in (\ref{DRsetup}) and can be transformed into
\begin{align} \label{DRsetup2}
& p_i = \pi_i \prod_{j=1}^J   \eta_j^{a_{ji}}, \,\,  \mathbf{A} \boldsymbol p = \boldsymbol h, 
\end{align}
where $\sum_{j = 1}^J a_{ji}= 1$, for each $i \in 1, \dots, I$, and  $\boldsymbol 1\tr \boldsymbol h =1$.
In their Theorem 1, \citep{DarrochRatcliff} showed that the solution to (\ref{DRsetup2}), if exists, is unique and can be found as the limit, as $n \to \infty$, of the sequence
\begin{align} \label{DRIter} 
&p^{(n+1)}_i = p^{(n)}_i\prod_{j =1}^J \left( \frac{ h_j}{ h_j^{(n)}}\right)^{a_{ji}}, \hspace{2mm}  \mbox{ for all } i \in \mathcal{I}, 
\end{align} 
where $p^{(0)}_i = \pi_i$ and $h_j^{(n)} = A_j \boldsymbol p^{(n)}$. 

The proof of convergence is built upon the properties of the Kullback-Leibler information divergence between two probability distributions, namely,  
\begin{equation} \label{KLdef} 
\mathcal{K}(\boldsymbol p, \boldsymbol q) = \sum_{i \in \mathcal{I}} p_i\mbox{log }(p_i/q_i),
\end{equation}
where $\boldsymbol p, \boldsymbol q \in \bar{\mathcal{P}}$; by convention, $0/0=1$ and $0\log 0 = 1$. Assuming that the constraints  (\ref{DRsetup2}) are consistent, \cite{DarrochRatcliff} first show that for any distribution $\boldsymbol q$ satisfying (\ref{DRsetup2}), 
$$\mathcal{K}(\boldsymbol q, \boldsymbol p^{(n+1)}) = \mathcal{K}(\boldsymbol q, \boldsymbol p^{(n)}) - \mathcal{K}(\boldsymbol h, \boldsymbol h^{(n)}),$$
and, therefore, because $\mathcal{K}(\boldsymbol p, \boldsymbol q) \geq 0$,  the sequence  $\mathcal{K}(\boldsymbol q, \boldsymbol p^{(n)})$ is monotone decreasing, as $n \to \infty$, and $\mathcal{K}(\boldsymbol h, \boldsymbol h^{(n)}) \to 0$, finally concluding that $\boldsymbol h^{(n)} \to \boldsymbol h$. See also \cite{Csiszar}. 

Two points in this proof deserve attention. Firstly, the condition (\ref{star}) is equivalent to $\mathbf{A}$ having the overall effect, and, secondly, when this condition is not met, it may happen that $\boldsymbol 1\tr \boldsymbol p^{(n)} > 1$ and $\mathcal{K}(\boldsymbol q, \boldsymbol p^{(n)}) < 0$. It is illustrated  in forthcoming Figure \ref{Ex41prob} of this paper. These issues were taken into account  in \cite{KRipf1}, who developed an iterative scaling procedure, IPF($\gamma$),  as an instantiation of the algorithm proposed by \cite{Bregman} to find a common point of convex sets \citep{KRipf1}, and, consequently, proved its convergence with respect to  the Bregman divergence associated with the function $F(\boldsymbol x) = \sum_{i \in \mathcal{I}} x_i \mbox{log }x_i$. Let $\boldsymbol t, \boldsymbol u \in \mathbb{R}^{|\mathcal{I}|}_{>0}\,$ and $\mathcal{D}(\boldsymbol t, \boldsymbol u)$ denote 
\begin{equation} \label{BDdef} 
\mathcal{D}(\boldsymbol t, \boldsymbol u) = \sum_{i \in \mathcal{I}} t_i \mbox{log }(t_i/u_i) - (\sum_{i \in \mathcal{I}} t_i - \sum_{i \in \mathcal{I}} u_i).
\end{equation}
A similar approach will be implemented in this paper for the general log-linear model (\ref{PMmatr}). It will be shown next that the procedure of Darroch and Ratcliff, after a certain modification, can indeed be used for the MLE computation under in the no-overall-effect case, not only for probabilities and but also for intensities.  

Let $\mathbf{A}$ be a non-negative integer matrix of full row rank, and $\mathbf{A}_1 = \mathbf{A}/\| \mathbf{A} \|_1$, where $\| \mathbf{A} \|_1$ stands for the $L_1$ norm of $\mathbf{A}$
(its maximal column sum):
 $\| \mathbf{A} \|_1 = \underset{1 \leq j \leq I}{{max}} \sum_{j = 1}^J |a_{ji}|.$
Let  $\boldsymbol q \in \mathbb{R}^I_{\geq 0}$, such that $A_1 \boldsymbol q, \dots, A_J \boldsymbol q > 0$, where $A_1, \dots, A_J$ denote the rows of $\mathbf{A}$. For the sake of brevity, it is assumed that $\mathbf{A}$ coincides with with its normalized version:    $ \mathbf{A} \equiv \mathbf{A}/\| \mathbf{A} \|_1$. 

For a fixed $\gamma > 0$, consider a sequence of vectors $\boldsymbol \delta^{(n)}_{\gamma}$ obtained during the following procedure, referred to in the sequel as GIS($\gamma$) Algorithm. Set ${\delta}_{\gamma,i}^{(0)} = 1$ for all $i \in \mathcal{I}$. Then iterate
\begin{eqnarray} 
\delta_i^{(n+1)} &=& \delta_i^{(n)} \prod_{j = 1}^J \left[\gamma\frac{{A}_{j} \boldsymbol{q}}{{A}_{j} \boldsymbol{\delta}^{(n)}}\right]^{a_{ji}}
  \,\, \mbox{for all } i \in \mathcal{I};   \label{RipfGamma} 
\end{eqnarray}
until  $\gamma A_{j}\boldsymbol{q} = A_{j}\boldsymbol{\delta}_{\gamma}^{(n+1)}$ for all $j$,  set $n = n+1$ up to a desired precision. 
After convergence, set $\boldsymbol{\delta}_{\gamma}^{*}=\boldsymbol{\delta}_{\gamma}^{(n)}$, and finish. \qed

\begin{theorem}\label{ThGammaNew}
The sequence  $\boldsymbol{\delta}_{\gamma}^{(n)}$, obtained from GIS($\gamma$), converges, as $n \to \infty$, and 
the limit  $\boldsymbol{\delta}_{\gamma}^{*}$ satisfies: (i) $\mathbf{A}\boldsymbol{\delta}_{\gamma}^* = \gamma \mathbf{A} \boldsymbol q$, (ii) $\mathbf{D} \mbox{log } \boldsymbol{\delta}_{\gamma}^* = \boldsymbol 0$.
\end{theorem}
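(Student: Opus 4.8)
The plan is to run the Darroch--Ratcliff argument but with the Bregman divergence $\mathcal D$ of (\ref{BDdef}) in place of the Kullback--Leibler divergence $\mathcal K$, so as to absorb the fact that, after normalization, the columns of $\mathbf A$ sum only to \emph{at most} $1$. Claim (ii) comes essentially for free: on the logarithmic scale (\ref{RipfGamma}) reads $\log\boldsymbol\delta^{(n+1)}=\log\boldsymbol\delta^{(n)}+\mathbf A\tr\boldsymbol r^{(n)}$, where $\boldsymbol r^{(n)}=\big(\log(\gamma A_j\boldsymbol q/A_j\boldsymbol\delta^{(n)})\big)_{j=1}^{J}$, so every increment lies in $rowspan(\mathbf A)$; the iterates clearly stay strictly positive ($\mathbf A$ has no zero row, so no factor in (\ref{RipfGamma}) degenerates), and since $\log\boldsymbol\delta^{(0)}=\boldsymbol 0\in rowspan(\mathbf A)$, induction gives $\log\boldsymbol\delta^{(n)}\in rowspan(\mathbf A)$ for all $n$. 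Because the rows of $\mathbf D$ span $Ker(\mathbf A)=rowspan(\mathbf A)^{\perp}$, this yields $\mathbf D\log\boldsymbol\delta^{(n)}=\boldsymbol 0$ for every $n$, and (ii) will follow by continuity once convergence to a strictly positive limit is established.

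\emph{The monotonicity estimate (the heart of the matter).} Let $c_i=\sum_j a_{ji}\in(0,1]$ be the column sums of the normalized $\mathbf A$, and $r_j^{(n)}=\gamma A_j\boldsymbol q/A_j\boldsymbol\delta^{(n)}$. I would fix any $\boldsymbol\delta\geq\boldsymbol 0$ with $\mathbf A\boldsymbol\delta=\gamma\mathbf A\boldsymbol q$ --- one always exists, e.g.\ $\boldsymbol\delta=\gamma\boldsymbol q$ --- and expand $\mathcal D(\boldsymbol\delta,\boldsymbol\delta^{(n)})-\mathcal D(\boldsymbol\delta,\boldsymbol\delta^{(n+1)})$ via (\ref{BDdef}). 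In the resulting entropy part, $\sum_i\delta_i\log(\delta_i^{(n+1)}/\delta_i^{(n)})=\sum_j(A_j\boldsymbol\delta)\log r_j^{(n)}=\sum_j\gamma(A_j\boldsymbol q)\log r_j^{(n)}$, using $\mathbf A\boldsymbol\delta=\gamma\mathbf A\boldsymbol q$. For the remaining mass terms $\sum_i(\delta_i^{(n)}-\delta_i^{(n+1)})$ I would apply convexity of $\exp$ to $\delta_i^{(n+1)}=\delta_i^{(n)}\exp(\sum_j a_{ji}\log r_j^{(n)})$, with the \emph{slack weight} $1-c_i\geq 0$ placed on the value $\log 1=0$, to get $\delta_i^{(n+1)}\leq\delta_i^{(n)}(\sum_j a_{ji}r_j^{(n)}+1-c_i)$; summing over $i$ and using $r_j^{(n)}A_j\boldsymbol\delta^{(n)}=\gamma A_j\boldsymbol q$ gives $\sum_i(\delta_i^{(n)}-\delta_i^{(n+1)})\geq\sum_j(A_j\boldsymbol\delta^{(n)}-\gamma A_j\boldsymbol q)$. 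Adding the two estimates,
\begin{equation*}
\mathcal D(\boldsymbol\delta,\boldsymbol\delta^{(n)})-\mathcal D(\boldsymbol\delta,\boldsymbol\delta^{(n+1)})\ \geq\ \mathcal D\big(\gamma\mathbf A\boldsymbol q,\ \mathbf A\boldsymbol\delta^{(n)}\big)\ \geq\ 0 .
\end{equation*}
Hence $\mathcal D(\boldsymbol\delta,\boldsymbol\delta^{(n)})$ is nonincreasing and bounded below, so convergent, and telescoping forces $\sum_n\mathcal D(\gamma\mathbf A\boldsymbol q,\mathbf A\boldsymbol\delta^{(n)})<\infty$, whence $A_j\boldsymbol\delta^{(n)}\to\gamma A_j\boldsymbol q>0$ for every $j$.

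\emph{Convergence of the iterates and claim (i).} By the standing assumption the MLE exists, so for the relevant $\gamma$ the system (i)--(ii) has a strictly positive solution $\boldsymbol\delta_\gamma^*$, unique by the usual exponential-family convexity argument. Running the estimate with $\boldsymbol\delta=\boldsymbol\delta_\gamma^*$ gives $\mathcal D(\boldsymbol\delta_\gamma^*,\boldsymbol\delta^{(n)})\leq\mathcal D(\boldsymbol\delta_\gamma^*,\boldsymbol\delta^{(0)})<\infty$; since $\mathcal D(\boldsymbol\delta_\gamma^*,\,\cdot\,)$ blows up whenever a coordinate tends to $0$ or to $\infty$, the whole sequence $\{\boldsymbol\delta^{(n)}\}$ stays in a compact subset of $\mathbb R^I_{>0}$. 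Any accumulation point $\boldsymbol\delta^\infty>0$ then satisfies $\mathbf A\boldsymbol\delta^\infty=\gamma\mathbf A\boldsymbol q$ (by the previous step) and $\log\boldsymbol\delta^\infty\in rowspan(\mathbf A)$, so it solves (i)--(ii) and therefore equals $\boldsymbol\delta_\gamma^*$; as $\mathcal D(\boldsymbol\delta_\gamma^*,\boldsymbol\delta^{(n)})$ is nonincreasing and has a subsequence converging to $\mathcal D(\boldsymbol\delta_\gamma^*,\boldsymbol\delta_\gamma^*)=0$, the whole sequence tends to $0$, which forces $\boldsymbol\delta^{(n)}\to\boldsymbol\delta_\gamma^*$. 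Continuity of $\mathbf A(\cdot)$ and $\mathbf D\log(\cdot)$ then delivers (i) and (ii).

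\emph{Main obstacle.} The only step that is not routine --- and the point where this departs from \cite{DarrochRatcliff} --- is the monotonicity estimate: once the column sums are $\leq 1$ rather than $=1$, the convexity bound must carry the slack weights $1-c_i$, and it is the full Bregman divergence $\mathcal D$, not $\mathcal K$, whose monotone decrease survives (precisely because, as the text notes, $\boldsymbol 1\tr\boldsymbol\delta^{(n)}$ can exceed $1$, so $\mathcal K$ may turn negative). A secondary point is keeping the limit off the boundary of $\mathbb R^I_{>0}$, which is exactly what existence of the MLE guarantees.
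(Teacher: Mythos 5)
Your proposal is correct and follows essentially the same route as the paper: the weighted AM--GM bound with the slack weights $1-c_i$ (the paper's Lemma \ref{LemmaDeltaInequality}) feeding into the monotone decrease of the Bregman divergence $\mathcal{D}$ along the iterates (Lemma \ref{LemmaBregmanInequality}), then telescoping to obtain $\mathbf{A}\boldsymbol\delta^{(n)}\to\gamma\mathbf{A}\boldsymbol q$, with (ii) handled by the same induction on the log scale. If anything, your compactness-plus-uniqueness argument for convergence of the iterates themselves is spelled out more carefully than in the paper, which passes from convergence of $\mathbf{A}\boldsymbol\delta^{(n)}$ to convergence of $\boldsymbol\delta^{(n)}$ without further comment.
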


 The proof of the theorem will be constructed from a series of lemmas, presented next.
 
 \begin{lemma}\label{LemmaDeltaInequality}
 Let $\mathbf{A}$ be the matrix defined in Theorem \ref{ThGammaNew}, $\boldsymbol{\delta}_{\gamma}^{(n)}$ is the sequence in (\ref{RipfGamma}), with $\gamma = 1$. Then,
  $$\sum_{i = 1}^I \delta^{(n+1)}_i \leq \sum_{i = 1}^I \delta^{(n)}_i +\sum_{j = 1}^J (A_j \boldsymbol q - A_j\boldsymbol \delta^{(n)}).$$
  \end{lemma}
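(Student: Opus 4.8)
The plan is to work directly with the update rule (\ref{RipfGamma}) at $\gamma=1$ and bound $\sum_i \delta_i^{(n+1)}$ by exploiting the fact that the columns of $\mathbf{A}$ sum to $1$ (since $\mathbf{A}$ is assumed to coincide with its $L_1$-normalized version, each column sum is at most $1$; the argument below only needs $\sum_j a_{ji}\le 1$, with the slack absorbed harmlessly). Writing $r_j^{(n)} = A_j\boldsymbol q / A_j\boldsymbol\delta^{(n)}$, the update is $\delta_i^{(n+1)} = \delta_i^{(n)}\prod_{j=1}^J (r_j^{(n)})^{a_{ji}}$. The key inequality is the weighted arithmetic–geometric mean inequality: for nonnegative weights $a_{ji}$ with $\sum_j a_{ji}\le 1$,
\[
\prod_{j=1}^J (r_j^{(n)})^{a_{ji}} \;\le\; \sum_{j=1}^J a_{ji}\, r_j^{(n)} \;+\; \Bigl(1 - \sum_{j=1}^J a_{ji}\Bigr)\cdot 1,
\]
where the last term corresponds to a fictitious ``slack'' coordinate carrying ratio $1$; this keeps the total weight equal to $1$ so that AM–GM applies cleanly.

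From here I would multiply through by $\delta_i^{(n)}$ and sum over $i\in\mathcal{I}$:
\[
\sum_{i=1}^I \delta_i^{(n+1)} \;\le\; \sum_{i=1}^I \delta_i^{(n)}\sum_{j=1}^J a_{ji}\, r_j^{(n)} \;+\; \sum_{i=1}^I \delta_i^{(n)}\Bigl(1-\sum_{j=1}^J a_{ji}\Bigr).
\]
The second sum on the right equals $\sum_i \delta_i^{(n)} - \sum_{i}\delta_i^{(n)}\sum_j a_{ji} = \sum_i\delta_i^{(n)} - \sum_j A_j\boldsymbol\delta^{(n)}$. For the first sum, swap the order of summation: $\sum_i \delta_i^{(n)} a_{ji} r_j^{(n)} = r_j^{(n)}\sum_i a_{ji}\delta_i^{(n)} = r_j^{(n)} A_j\boldsymbol\delta^{(n)} = A_j\boldsymbol q$, using the definition of $r_j^{(n)}$. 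Hence the first sum is exactly $\sum_{j=1}^J A_j\boldsymbol q$. Adding the two pieces gives
\[
\sum_{i=1}^I \delta_i^{(n+1)} \;\le\; \sum_{i=1}^I \delta_i^{(n)} \;+\; \sum_{j=1}^J \bigl(A_j\boldsymbol q - A_j\boldsymbol\delta^{(n)}\bigr),
\]
which is the claim.

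The main obstacle — really the only point requiring care — is justifying the AM–GM step when the column sums are strictly less than $1$ (which happens precisely in the no-overall-effect regime, and is exactly what makes this lemma nontrivial compared with the classical Darroch–Ratcliff setup). Introducing the slack weight $1-\sum_j a_{ji}\ge 0$ attached to ratio $1$ handles this, and one must check that the positivity hypotheses ($A_j\boldsymbol q>0$ and $\delta_i^{(n)}>0$, the latter following inductively from $\delta_i^{(0)}=1$ and the multiplicative update) guarantee all the $r_j^{(n)}$ are well-defined and positive so the inequality is meaningful. A minor additional point is that if some column sum is $0$ — impossible here since $\mathbf{A}$ has no zero columns — the bound would still hold trivially; I would note this only in passing.
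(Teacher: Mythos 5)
Your proposal is correct and follows essentially the same route as the paper's own proof: both write the update as $\delta_i^{(n)}\prod_j (A_j\boldsymbol q/A_j\boldsymbol\delta^{(n)})^{a_{ji}}\cdot 1^{\,1-\sum_j a_{ji}}$, apply the weighted AM--GM inequality with the slack weight $1-\sum_j a_{ji}$ attached to the ratio $1$, and then sum over $i$ and interchange the order of summation so that $\sum_i a_{ji}\delta_i^{(n)}\,A_j\boldsymbol q/A_j\boldsymbol\delta^{(n)} = A_j\boldsymbol q$. Your added remarks on positivity of the $r_j^{(n)}$ and on the column sums being at most $1$ are sensible housekeeping that the paper leaves implicit.
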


\begin{proof}
  
In the proof, the weighted arithmetic mean - geometric mean inequality:
\begin{equation}\label{wAGinequality}
 \frac{w_1 x_1 + \dots w_n x_n}{w} \geq \left (x_1^{w_1} \cdots x_n^{w_n}  \right)^{1/w},
 \end{equation}
 will be used.  Here $x_1, \dots x_n > 0$, and $w_i \geq 0$, such that $w_1 + \cdots + w_n = w$.

Notice that $$\delta_i^{(n+1)} = \delta_i^{(n)} \prod_{j = 1}^J \left[\frac{{A}_{j} \boldsymbol{q}}{{A}_{j} \boldsymbol{\delta}^{(n)}}\right]^{a_{ji}} \cdot 1^{1-\sum_{j=1}^J a_{ij}},$$
and apply the inequality (\ref{wAGinequality}):
$$\delta_i^{(n+1)} \leq \delta_i^{(n)} \left( \sum_{j = 1}^J a_{ji} \frac{{A}_{j} \boldsymbol{q}}{{A}_{j} \boldsymbol{\delta}^{(n)}}  + {1-\sum_{j=1}^J a_{ji}}\right) = \sum_{j = 1}^J a_{ji}\delta_i^{(n)}  \frac{{A}_{j} \boldsymbol{q}}{{A}_{j} \boldsymbol{\delta}^{(n)}}  + \delta_i^{(n)} -\sum_{j=1}^J a_{ji}\delta_i^{(n)}.$$   
Therefore,
\begin{align*}
\sum_{i = 1}^I\delta_i^{(n+1)} &\leq \sum_{i = 1}^I \sum_{j = 1}^J a_{ji}\delta_i^{(n)}  \frac{{A}_{j} \boldsymbol{q}}{{A}_{j} \boldsymbol{\delta}^{(n)}}  +  \sum_{i =1}^I\delta_i^{(n)} -  \sum_{i =1}^I\sum_{j=1}^J a_{ji}\delta_i^{(n)}\\
& = \sum_{i =1}^I\delta_i^{(n)} +  \sum_{j = 1}^J   \frac{{A}_{j} \boldsymbol{q}}{{A}_{j} \boldsymbol{\delta}^{(n)}} A_j \boldsymbol \delta^{(n)} - \sum_{j = 1}^J A_j \boldsymbol \delta^{(n)}  = \sum_{i =1}^I\delta_i^{(n)} + \sum_{j = 1}^J (A_j \boldsymbol q - A_j\boldsymbol \delta^{(n)}).
\end{align*}                                           
 \end{proof}
 
 In the next lemma, whose proof is deferred to the Appendix, it will be shown that the sequence $\boldsymbol \delta^{(n)}$, in fact, converges to a projection of $\boldsymbol q$ on the cone $\mathbf{A} \boldsymbol \delta = \mathbf{A} {\boldsymbol q}$. 
 \begin{lemma}\label{LemmaBregmanInequality}
 Let $\mathbf{A}$ be the matrix defined in Theorem \ref{ThGammaNew}, $\boldsymbol{\delta}_{\gamma}^{(n)}$ is the sequence in (\ref{RipfGamma}), with $\gamma = 1$. Then, for any $\boldsymbol z \in \mathbb{R}^I_{>0}$, \quad
  $\mathcal{D}(\boldsymbol z, \boldsymbol \delta^{(n+1)}) \leq  \mathcal{D}(\boldsymbol z, \boldsymbol \delta^{(n)}) - \mathcal{D}(\mathbf{A}\boldsymbol q, \mathbf{A}\boldsymbol \delta^{(n)}).$
  \end{lemma}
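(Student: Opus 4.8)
The plan is to expand both sides from the definition (\ref{BDdef}) of $\mathcal{D}$ and then substitute the update rule (\ref{RipfGamma}) into the result. Forming the increment $\mathcal{D}(\boldsymbol z, \boldsymbol \delta^{(n+1)}) - \mathcal{D}(\boldsymbol z, \boldsymbol \delta^{(n)})$, the terms $\sum_i z_i \log z_i$ and $\sum_i z_i$ cancel, and one is left with
\begin{equation*}
\mathcal{D}(\boldsymbol z, \boldsymbol \delta^{(n+1)}) - \mathcal{D}(\boldsymbol z, \boldsymbol \delta^{(n)}) = \sum_{i = 1}^I z_i \log \frac{\delta_i^{(n)}}{\delta_i^{(n+1)}} + \sum_{i = 1}^I \delta_i^{(n+1)} - \sum_{i = 1}^I \delta_i^{(n)}.
\end{equation*}
This splits the increment into a log-ratio part and a total-mass part, which I would handle separately. (I read the intended hypothesis on $\boldsymbol z$ as $\mathbf{A}\boldsymbol z = \mathbf{A}\boldsymbol q$, since that is what makes the first part collapse; for a general positive $\boldsymbol z$ the log-ratio part only rewrites in terms of $A_j\boldsymbol z$.)

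For the log-ratio part, (\ref{RipfGamma}) with $\gamma = 1$ gives $\log(\delta_i^{(n+1)}/\delta_i^{(n)}) = \sum_{j=1}^J a_{ji}\log\bigl(A_j\boldsymbol q / A_j\boldsymbol\delta^{(n)}\bigr)$, so after interchanging the order of summation,
\begin{equation*}
\sum_{i = 1}^I z_i \log \frac{\delta_i^{(n)}}{\delta_i^{(n+1)}} = - \sum_{j = 1}^J (A_j\boldsymbol z)\log\frac{A_j\boldsymbol q}{A_j\boldsymbol\delta^{(n)}} = - \sum_{j = 1}^J (A_j\boldsymbol q)\log\frac{A_j\boldsymbol q}{A_j\boldsymbol\delta^{(n)}},
\end{equation*}
the last equality using $A_j\boldsymbol z = A_j\boldsymbol q$ for every $j$. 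For the total-mass part, Lemma \ref{LemmaDeltaInequality} supplies exactly the bound $\sum_i \delta_i^{(n+1)} - \sum_i \delta_i^{(n)} \le \sum_j (A_j\boldsymbol q - A_j\boldsymbol\delta^{(n)})$. Adding the two estimates and comparing with (\ref{BDdef}) applied to the pair $(\mathbf{A}\boldsymbol q, \mathbf{A}\boldsymbol\delta^{(n)})$, the right-hand side is seen to be precisely $-\mathcal{D}(\mathbf{A}\boldsymbol q, \mathbf{A}\boldsymbol\delta^{(n)})$, which rearranges to the asserted inequality.

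I expect the only delicate point to be the bookkeeping of the total-mass term. In the classical Darroch--Ratcliff setting this term vanishes identically because every iterate is a probability vector, and it is precisely its possible non-vanishing in the no-overall-effect case that forces an inequality (through the weighted AM--GM step) in Lemma \ref{LemmaDeltaInequality}, and hence an inequality rather than an identity here. Alongside this I would check the positivity needed for the logarithms to be meaningful, i.e.\ that $A_j\boldsymbol\delta^{(n)} > 0$ for all $n$ and $j$; this follows inductively from $\boldsymbol\delta^{(0)} = \boldsymbol 1 > 0$, the standing assumption $A_j\boldsymbol q > 0$, and the strictly positive multiplicative form of the update (\ref{RipfGamma}).
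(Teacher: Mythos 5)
Your proposal is correct and follows essentially the same route as the paper's own proof: expand $\mathcal{D}$ from (\ref{BDdef}), substitute the update rule (\ref{RipfGamma}) and interchange the order of summation to handle the logarithmic term, and invoke Lemma \ref{LemmaDeltaInequality} to bound the total-mass term. Your caveat that the stated conclusion really requires $\mathbf{A}\boldsymbol z = \mathbf{A}\boldsymbol q$ (rather than holding for arbitrary $\boldsymbol z \in \mathbb{R}^I_{>0}$) is well taken and consistent with the paper's own argument, which obtains the displayed inequality for general $\boldsymbol z$ only up to the term $-\sum_j A_j\boldsymbol z\,\log\bigl[{A_j\boldsymbol q}/{A_j\boldsymbol\delta^{(n)}}\bigr] + \sum_j A_j\boldsymbol q - \sum_j A_j\boldsymbol\delta^{(n)}$ and then concludes by taking $\boldsymbol z \equiv \boldsymbol q$.
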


 A proof of Theorem \ref{ThGammaNew} is given now. 
  
\begin{proof}
  
Because the sequence $\mathcal{D}(\boldsymbol q, \boldsymbol \delta^{(n)})$ is bounded by zero from below and, by Lemma \ref{LemmaBregmanInequality}, monotone decreasing,   $\mathcal{D}(\mathbf{A}\boldsymbol q, \mathbf{A}\boldsymbol \delta^{(n)}) \to 0$ as $n \to \infty$, which implies that 
 $\mathbf{A}\boldsymbol \delta^{(n)} \to \mathbf{A}\boldsymbol q$.

  
(i) In more generality, when $0 < \gamma \neq 1$,  after replacing $\boldsymbol q$ with $\gamma \boldsymbol q$, and, respectively, $\boldsymbol{\delta}^{(d)}$ with $\boldsymbol{\delta}_{\gamma}^{(d)}$, one has 
   $$\mathcal{D}(\boldsymbol z, \boldsymbol \delta^{(n+1)}) \leq \mathcal{D}(\boldsymbol z, \boldsymbol \delta^{(n)}) - 
 \sum_{j = 1}^J  A_j \boldsymbol z \cdot \log \left[\frac{\gamma{A}_{j} \boldsymbol{q}}{{A}_{j} \boldsymbol{\delta}^{(n)}}\right] + \sum_{j = 1}^J  \gamma {A}_{j} \boldsymbol{q} - \sum_{j = 1}^J A_j \boldsymbol \delta^{(n)},$$
 and for a $\boldsymbol z \in \mathbb{R}^I_{>0}$, that $\mathbf{A} \boldsymbol z = \gamma \mathbf{A} \boldsymbol q$:
 $$\mathcal{D}(\boldsymbol z, \boldsymbol \delta^{(n+1)}) \leq  \mathcal{D}(\boldsymbol q, \boldsymbol \delta^{(n)}) - \mathcal{D}(\gamma \mathbf{A}\boldsymbol q, \mathbf{A}\boldsymbol \delta^{(n)}).$$
Analogous to the above, $\mathbf{A}\boldsymbol \delta^{(n)} \to \gamma \mathbf{A}\boldsymbol q$, so one concludes that the sequence $\boldsymbol{\delta}_{\gamma}^{(d)}$ converges and its limit, $\boldsymbol{\delta}_{\gamma}^*$, satisfies 
$\mathbf{A}\boldsymbol{\delta}_{\gamma}^* = \gamma\boldsymbol A \boldsymbol{q}.$ 
(ii) It can be proven by induction that during the coordinate transformation, the values of the generalized odds ratios do not change. Since ${\delta}_{\gamma,i}^{(0)} = 1$, for all $i \in \mathcal{I}$, $\mathbf{D} \mbox{log } \boldsymbol {\delta}_{\gamma}^{(0)} = \boldsymbol 0$,  the statement holds for $d = 0$. Assume that $\mathbf{D} \mbox{log } \boldsymbol{\delta}_{\gamma}^{(d)} = \boldsymbol 0$ for a positive integer $d$. Set $C_j = \frac{\gamma {A}_{j} \boldsymbol{q}}{{A}_{j} \boldsymbol{\delta}_{\gamma}^{(d)}}$. Then, 
\begin{eqnarray*}
\mathbf{D} \mbox{log } \boldsymbol{\delta}_{\gamma}^{(d+1)} &=& 
 \mathbf{D} \mbox{log } \boldsymbol{\delta}_{\gamma}^{(d)} + \sum_{j = 1}^J\mbox{log } C_j \mathbf{D}A'_j = \boldsymbol 0,
\end{eqnarray*}
because $\mathbf{D}$ is a kernel basis matrix and thus $\mathbf{D}A'_{j}= \boldsymbol 0$.
Therefore, $\mathbf{D} \mbox{log } \boldsymbol{\delta}_{\gamma}^{(d)} = \boldsymbol 0$ for all $d = 0, 1, 2 \dots$, and, by continuity of matrix multiplication and logarithm, $\mathbf{D} \mbox{log } \boldsymbol{\delta}_{\gamma}^{*} = \boldsymbol 0$. 
 \end{proof}

\begin{corollary}
Let $\mathcal{G}(\mathbf{B})$ be a log-linear model for intensities, $\boldsymbol y$ be a realization of $\boldsymbol Y \sim Pois(\boldsymbol \lambda)$, and assume that the MLE $\hat{\boldsymbol \lambda}$ exists. Then, the sequence $\boldsymbol \delta^{(n)}_1$ obtained from $GIS(1)$ with $\mathbf{A} = \mathbf{B}/\|\mathbf{B} \|_1$ and $\boldsymbol q \equiv \boldsymbol y$ converges  to the MLE   $\hat{\boldsymbol \lambda}$ under the model $\mathcal{G}(\mathbf{B})$.
\end{corollary}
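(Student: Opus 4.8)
The plan is to reduce the statement to a direct application of Theorem~\ref{ThGammaNew} with $\gamma=1$, followed by an identification of the resulting limit with the MLE via Birch's theorem, using the fact recalled in Section~\ref{sectionNotation} that a log-linear model for intensities is always a regular exponential family (so its adjustment factor is identically $1$).

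First I would record the elementary bookkeeping linking $\mathbf{B}$ and $\mathbf{A}=\mathbf{B}/\|\mathbf{B}\|_1$. Since $\mathbf{A}$ is a positive scalar multiple of $\mathbf{B}$, it has the same row span and the same kernel; in particular the integer matrix $\mathbf{D}$ whose rows are a basis of $Ker(\mathbf{B})$ is also a basis of $Ker(\mathbf{A})$, so $\mathcal{G}(\mathbf{A})=\mathcal{G}(\mathbf{B})$, and $\mathbf{A}\boldsymbol\delta=\mathbf{A}\boldsymbol y$ holds exactly when $\mathbf{B}\boldsymbol\delta=\mathbf{B}\boldsymbol y$. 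Moreover $\mathbf{A}$ is already in the normalized form $\mathbf{A}\equiv\mathbf{A}/\|\mathbf{A}\|_1$ required by Theorem~\ref{ThGammaNew}. I would also note that the standing hypothesis $A_j\boldsymbol y>0$ for every $j$ is not an extra assumption here: if $A_j\boldsymbol y=0$ for some $j$, then, because $\mathbf{A}$ has non-negative entries and (being of full row rank) no zero row, the Birch identity $A_j\hat{\boldsymbol\lambda}=A_j\boldsymbol y$ would force $A_j\hat{\boldsymbol\lambda}=0$, contradicting $\hat{\boldsymbol\lambda}>\boldsymbol 0$; hence the existence of $\hat{\boldsymbol\lambda}$ already guarantees $A_j\boldsymbol y>0$ for all $j$.

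Next I would invoke Theorem~\ref{ThGammaNew} with $\gamma=1$ and $\boldsymbol q\equiv\boldsymbol y$: the GIS($1$) sequence $\boldsymbol{\delta}_1^{(n)}$ converges to a limit $\boldsymbol{\delta}_1^{*}$ satisfying $\mathbf{A}\boldsymbol{\delta}_1^{*}=\mathbf{A}\boldsymbol y$ (equivalently $\mathbf{B}\boldsymbol{\delta}_1^{*}=\mathbf{B}\boldsymbol y$) and $\mathbf{D}\log\boldsymbol{\delta}_1^{*}=\boldsymbol 0$, i.e.\ $\boldsymbol{\delta}_1^{*}\in\mathcal{G}(\mathbf{B})$. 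To finish, I would use that for intensities $\mathcal{G}(\mathbf{B})$ is a regular exponential family of order $J$, so by Birch's theorem the MLE system~(\ref{MLEsys}) holds with $\gamma\equiv 1$ and with no normalization constraint; thus, when it exists, $\hat{\boldsymbol\lambda}$ is the unique strictly positive solution of $\mathbf{B}\boldsymbol\delta=\mathbf{B}\boldsymbol y$, $\mathbf{D}\log\boldsymbol\delta=\boldsymbol 0$. Since $\boldsymbol{\delta}_1^{*}$ solves precisely this system, $\boldsymbol{\delta}_1^{*}=\hat{\boldsymbol\lambda}$.

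The content of the argument is really carried by Theorem~\ref{ThGammaNew}, so I do not anticipate a serious obstacle. The one point meriting care is that the identification $\boldsymbol{\delta}_1^{*}=\hat{\boldsymbol\lambda}$ requires knowing that $\boldsymbol{\delta}_1^{*}$ is strictly positive, so that it is a legitimate element of $\mathcal{P}$ and the uniqueness part of the MLE characterization applies. Theorem~\ref{ThGammaNew}(ii) supplies this, since the relation $\mathbf{D}\log\boldsymbol{\delta}_1^{*}=\boldsymbol 0$ already presupposes positivity; and it is exactly here that the hypothesis that the MLE exists does its work, ruling out the sequence drifting to the boundary of the positive orthant (the closure of the manifold $\{\boldsymbol\delta>\boldsymbol 0:\mathbf{D}\log\boldsymbol\delta=\boldsymbol 0\}$ then meets $\{\mathbf{B}\boldsymbol\delta=\mathbf{B}\boldsymbol y\}$ in a single interior point, which the sequence must approach). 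Granting positivity, the remainder is just the substitution $\gamma=1$, $\boldsymbol q=\boldsymbol y$ into conclusions (i)–(ii) together with Birch's theorem, as above.
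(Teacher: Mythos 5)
Your proof is correct and follows essentially the same route as the paper, whose entire argument is the one-line observation that the MLE is the unique solution of the system~(\ref{MLEsys}) with $\gamma\equiv 1$ and that the GIS($1$) limit satisfies exactly that system by Theorem~\ref{ThGammaNew}. Your additional remarks (the $\mathbf{B}$-versus-$\mathbf{A}$ bookkeeping, the fact that existence of $\hat{\boldsymbol\lambda}$ forces $A_j\boldsymbol y>0$, and the strict positivity of the limit) simply make explicit details the paper leaves implicit.
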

\noindent Because the MLE $\hat{\boldsymbol \lambda}$  is the unique solution to (\ref{MLEsys}), the statement immediately follows.

Clearly, GIS($\gamma$) does not guarantee that the normalization condition holds, that is, the limit vector does not necessarily sum to 1, and, therefore, is not sufficient for models for probabilities without the overall effect. However, the algorithm can be employed as a ``core step'' of the MLE computation, similarly to how it is performed for relational models for probabilities without the overall effect \citep{KRipf1}. The overall idea is to  construct an algorithm with two steps: Core and Adjustment. A core step, consisting of running GIS($\gamma$)  for a fixed $\gamma$, is alternated with an ``adjustment step'', during which the current value of $\gamma$ is updated, to ensure that the resulting total converges to $1$. For example, the adjustment step can be implemented using the Newton's step of  \cite{Forcina2019}. The algorithm is terminated after the total is close to $1$ up to a desired prescision. The MLE computation via the two-step procedure is summarized in the next theorem:

\begin{theorem}\label{G_IPFepsConv}
 Let $\mathcal{G}(\mathbf{B})$ be a log-linear model for probabilities, $\boldsymbol y$ be a realization of $Mult(N, \boldsymbol p)$, and assume that the MLE $\hat{\boldsymbol p}$, given $\boldsymbol y$, exists. Take $\mathbf{A} = \mathbf{B}/\|\mathbf{B} \|_1$ and $\boldsymbol q \equiv \boldsymbol y/(\boldsymbol 1\tr \boldsymbol y)$.
Initiating from $\gamma^{(0)} = 1$ and $\boldsymbol \delta^{(0)} = \boldsymbol 1$, and alternating the core step, GIS($\gamma^{(d)}$), and  $\gamma$-adjustment step: $$\gamma^{(d+1)} =  \gamma^{(d)} - \frac{\boldsymbol 1' \tilde{\boldsymbol \delta}^{(d)}}{\gamma^{(d)}(\mathbf{A} \boldsymbol q)\tr  (\mathbf{A} \,  diag (\,{\tilde{\boldsymbol \delta}^{(d)}}) \, \mathbf{A}\tr)^{-1} (\mathbf{A} \boldsymbol q)},$$ obtain the sequence $\tilde{\boldsymbol \delta}_{\gamma^{(d)}}$, where, for each $d \geq 0$, $\tilde{\boldsymbol \delta}_{\gamma^{(d)}}$ is the limit of GIS($\gamma^{(d)}$). 
Then, as $d \to \infty$, the sequence $\tilde{\boldsymbol \delta}_{\gamma^{(d)}}$ converges, and its limit is equal to the MLE   $\hat{\boldsymbol p}$ under the model $\mathcal{G}(\mathbf{B})$.
 \end{theorem}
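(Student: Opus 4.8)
The plan is to reduce the statement to a scalar root-finding problem in $\gamma$ and then combine the core step of Theorem~\ref{ThGammaNew} with the uniqueness of the system~(\ref{MLEsys}). Recall that $\hat{\boldsymbol p}$ is characterised by the three conditions in~(\ref{MLEsys}): the moment equation $\mathbf{A}\boldsymbol\delta = \gamma\mathbf{A}\boldsymbol q$, the kernel equation $\mathbf{D}\,\mbox{log }\boldsymbol\delta = \boldsymbol 0$, and the normalisation $\boldsymbol 1\tr\boldsymbol\delta = 1$. For any fixed $\gamma>0$ (here $\mathbf{A}\boldsymbol q \in \{\mathbf{A}\boldsymbol\delta:\boldsymbol\delta>\boldsymbol 0\}$, because $\hat{\boldsymbol p}$ exists and that cone is scale-invariant, so every such $\gamma$ is admissible), Theorem~\ref{ThGammaNew} gives that the core step GIS($\gamma$) converges to a vector $\tilde{\boldsymbol\delta}_\gamma$ satisfying the first two conditions, and $\tilde{\boldsymbol\delta}_\gamma$ is in fact the \emph{only} positive vector satisfying them: if $\boldsymbol\delta_1 = \exp(\mathbf{A}\tr\boldsymbol\beta_1)$ and $\boldsymbol\delta_2 = \exp(\mathbf{A}\tr\boldsymbol\beta_2)$ both do, then, since $\mathbf{A}\boldsymbol\delta_1 = \mathbf{A}\boldsymbol\delta_2 = \gamma\mathbf{A}\boldsymbol q$, $0 = (\mathbf{A}\boldsymbol\delta_1 - \mathbf{A}\boldsymbol\delta_2)\tr(\boldsymbol\beta_1 - \boldsymbol\beta_2) = \sum_i (\delta_{1i}-\delta_{2i})\,\mbox{log}(\delta_{1i}/\delta_{2i}) \ge 0$, forcing $\boldsymbol\delta_1 = \boldsymbol\delta_2$. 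Thus everything is decided by the single scalar $\gamma$, and the natural device is the profile function $\Phi(\gamma) = \boldsymbol 1\tr\tilde{\boldsymbol\delta}_\gamma$; the theorem reduces to showing that the $\gamma$-adjustment recursion converges to the unique root $\gamma^{*}$ of $\Phi(\gamma)=1$ and that $\tilde{\boldsymbol\delta}_{\gamma^{*}} = \hat{\boldsymbol p}$.

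Next I would record the properties of $\Phi$. With $\tilde{\boldsymbol\delta}_\gamma = \exp(\mathbf{A}\tr\boldsymbol\beta_\gamma)$, the relation $\mathbf{A}\exp(\mathbf{A}\tr\boldsymbol\beta) = \gamma\mathbf{A}\boldsymbol q$ that defines $\boldsymbol\beta_\gamma$ has Jacobian $\mathbf{A}\,\mathrm{diag}(\tilde{\boldsymbol\delta}_\gamma)\,\mathbf{A}\tr$ in $\boldsymbol\beta$, which is positive definite because $\mathbf{A}$ has full row rank and $\tilde{\boldsymbol\delta}_\gamma>\boldsymbol 0$; the implicit function theorem then makes $\gamma\mapsto\boldsymbol\beta_\gamma$, hence $\gamma\mapsto\tilde{\boldsymbol\delta}_\gamma$ and $\Phi$, smooth, and differentiation gives $\Phi'(\gamma) = \gamma\,(\mathbf{A}\boldsymbol q)\tr(\mathbf{A}\,\mathrm{diag}(\tilde{\boldsymbol\delta}_\gamma)\,\mathbf{A}\tr)^{-1}(\mathbf{A}\boldsymbol q) > 0$. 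The right-hand side is precisely the denominator in the displayed $\gamma$-adjustment, so that step is a Newton iteration for $\Phi(\gamma)=1$; and $\Phi'>0$ makes $\Phi$ strictly increasing, so $\Phi(\gamma)=1$ has at most one root. That it has one, equal to the adjustment factor of $\hat{\boldsymbol p}$, follows from the hypothesis that $\hat{\boldsymbol p}$ exists: its adjustment factor $\gamma^{*}$ satisfies $\mathbf{A}\hat{\boldsymbol p} = \gamma^{*}\mathbf{A}\boldsymbol q$, $\mathbf{D}\,\mbox{log }\hat{\boldsymbol p} = \boldsymbol 0$ and $\boldsymbol 1\tr\hat{\boldsymbol p} = 1$, so by the uniqueness just shown $\tilde{\boldsymbol\delta}_{\gamma^{*}} = \hat{\boldsymbol p}$ and $\Phi(\gamma^{*}) = \boldsymbol 1\tr\hat{\boldsymbol p} = 1$.

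The remaining and, I expect, hardest step is to show that the Newton recursion for $\Phi(\gamma)=1$, started from $\gamma^{(0)}=1$ with $\tilde{\boldsymbol\delta}^{(d)} = \tilde{\boldsymbol\delta}_{\gamma^{(d)}}$ delivered by the core step, produces iterates that stay admissible and converge to $\gamma^{*}$. This is the content of the Newton step analysed in \cite{Forcina2019}, which can be invoked directly; alternatively it can be established from scratch by exhibiting a convexity property of $\Phi$ (so that the tangent-line updates overshoot $\gamma^{*}$ at most once and are monotone thereafter), or by recasting the whole two-step scheme as an alternating Bregman projection onto the model and onto the normalisation set $\{\boldsymbol 1\tr\boldsymbol\delta = 1\}$, in the spirit of \cite{KRipf1}, for which convergence is classical. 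Once $\gamma^{(d)}\to\gamma^{*}$, continuity of $\gamma\mapsto\tilde{\boldsymbol\delta}_\gamma$ (from the second paragraph) gives $\tilde{\boldsymbol\delta}_{\gamma^{(d)}}\to\tilde{\boldsymbol\delta}_{\gamma^{*}} = \hat{\boldsymbol p}$; and since $\mathbf{A} = \mathbf{B}/\|\mathbf{B}\|_1$ has the same row span and kernel as $\mathbf{B}$, every iterate $\tilde{\boldsymbol\delta}_{\gamma^{(d)}}$ already lies on $\mathcal{G}(\mathbf{B})$, so the limit is the MLE $\hat{\boldsymbol p}$, as claimed. The two genuinely delicate points are the admissibility of the Newton iterates (removed by a safeguarded step, or by \cite{Forcina2019}) and the convexity of $\Phi$; both are bypassed by the alternating-projection viewpoint.
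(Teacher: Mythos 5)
Your proposal is correct in structure and, in fact, supplies essentially everything that the paper's own one-sentence justification (``the statement directly follows from Theorem \ref{ThGammaNew}'') leaves implicit. The reduction to the profile function $\Phi(\gamma)=\boldsymbol 1\tr\tilde{\boldsymbol\delta}_{\gamma}$ is the right device: your uniqueness argument for $\tilde{\boldsymbol\delta}_{\gamma}$ at fixed $\gamma$ (via $\sum_i(\delta_{1i}-\delta_{2i})\log(\delta_{1i}/\delta_{2i})\ge 0$), the implicit-function computation giving $\Phi'(\gamma)=\gamma(\mathbf{A}\boldsymbol q)\tr(\mathbf{A}\,\mathrm{diag}(\tilde{\boldsymbol\delta}_{\gamma})\,\mathbf{A}\tr)^{-1}(\mathbf{A}\boldsymbol q)>0$, and the identification of the root $\gamma^{*}$ with the adjustment factor of $\hat{\boldsymbol p}$ are all sound, and together they pin down the target as the unique solution of $\Phi(\gamma)=1$. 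The one step you do not actually prove --- global convergence of the Newton recursion started at $\gamma^{(0)}=1$ --- is also not proved in the paper, which delegates it to the Newton step of Forcina (2019); your observation that strict monotonicity of $\Phi$ alone does not suffice, and your proposed safeguards (a bracketed/bisection fallback, or the alternating Bregman-projection reading in the spirit of IPF($\gamma$)), go beyond what the paper offers and would be needed for a complete argument.

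One point you pass over silently: as displayed, the numerator of the $\gamma$-adjustment is $\boldsymbol 1'\tilde{\boldsymbol\delta}^{(d)}$, so the recursion has no fixed point (the increment can never vanish for a positive $\tilde{\boldsymbol\delta}^{(d)}$). For the step to be Newton's method for $\Phi(\gamma)=1$, as your reading assumes, the numerator must be $\boldsymbol 1'\tilde{\boldsymbol\delta}^{(d)}-1$; this appears to be a typo in the statement, and your proof implicitly corrects it. It would strengthen your write-up to say so explicitly rather than to rely on the corrected form without comment.
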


\noindent The statement directly follows from Theorem \ref{ThGammaNew}. In the following example, the MLE computation using GIS($\gamma$) is illustrated in both cases (intensities and probabilities). 


\begin{example}\label{ExampleForPlots}
Consider a general log-linear model specified by the model matrix 
$$\mathbf{A} = \left( \begin{array}{cccc}
                                   1 & 0 & 3 & 2\\
                                   1 & 3 & 0 & 2\\
                                   \end{array} \right).
                                   $$
Because the row of $1$'s is not in the row span of $\mathbf{A}$, the model $\mathcal{G}(\mathbf{A})$ does not have the overall effect. Therefore, previously proposed iterative scaling algorithms, even those which allow for non 0-1 entries in model matrix,  cannot be used to find the MLE.

For  illustration purposes, two models are of interest: $\mathcal{G}(\mathbf{A})$ as a model for intensities and as a model for probabilities. Both are models without the overall effect, since $\boldsymbol 1\tr \notin rowspan(\mathbf{A})$.  Let the observed frequency distribution be $\boldsymbol y$ = $(1,2,3,4)\tr$. 

In the context of intensities,   GIS($1$), applied directly to the data $\boldsymbol y$, converges to the vector  $\tilde{\boldsymbol \lambda}$ =  $(1.8575, 2.0805, 3.0806, 3.4504)\tr$, such that, up to a precision, $\mathbf{A} \tilde{\boldsymbol \lambda} = \mathbf{A} \boldsymbol y$ and $\mathbf{D} \log \tilde{\boldsymbol \lambda} = \boldsymbol 0$. By uniqueness,  $\tilde{\boldsymbol \lambda} = \hat{\boldsymbol \lambda}$ is the MLE. Notice that the estimated total is not equal to the observed one, 
$\boldsymbol 1\tr \hat{\boldsymbol \lambda} = 10.4690 \neq 10$. Until convergence, 41 iterations were required.

In order to find the MLE in the probabilities case, the iterative scaling has to be applied to normalized data, $\boldsymbol q = \boldsymbol y/\boldsymbol 1\tr \boldsymbol y$. To initiate, choose, for example, $\gamma =1$ and apply GIS($1$).  For $\boldsymbol q = (0.1, 0.2, 0.3, 0.4)\tr$,  GIS($1$) converges to the vector $\tilde{\boldsymbol\delta}_1$ = $(0.4284, 0.2348, 0.3348, 0.1835)\tr$, for which $\mathbf{A}\tilde{\boldsymbol{\delta}}_1 =  \mathbf{A} \boldsymbol q$ and $\mathbf{D} \mbox{log } \tilde{\boldsymbol{\delta}}_1 = \boldsymbol 0$. However, because the total ${\boldsymbol{1}}\tr \tilde{\boldsymbol{\delta}}_1 = 1.1815 \neq 1$,  $ \tilde{\boldsymbol{\delta}}_1$ is not the MLE, so the value of $\gamma$ has to be adjusted. The subsequent application of GIS($\gamma$) and $\gamma$-adjustment step is performed until the total is as close to $1$ as desired. 
In this example, 10 adjustment steps were needed to achieve the convergence to the total of $1$ within four decimal places. Within each adjustment step, the convergence in $L_2$ norm in sufficient statistics was achieved in 37 iterations. At the convergence, the MLE for probabilities was found to be $\hat{\boldsymbol p} = (0.3799, 0.1960, 0.2798, 0.1443)\tr$, with $\hat{\gamma} = 0.8377$.   
In Figure \ref{Ex41prob}, the initial step ($\gamma = 1$) and the last adjustment step ($\gamma = 0.8377$) are depicted. Notice that the KL divergence can take negative values, so the proof of \cite{DarrochRatcliff} does not apply. \qed
\end{example}

\begin{figure}[htbp]
\begin{minipage}[t]{0.45\textwidth}
\centering
\includegraphics[scale=0.25]{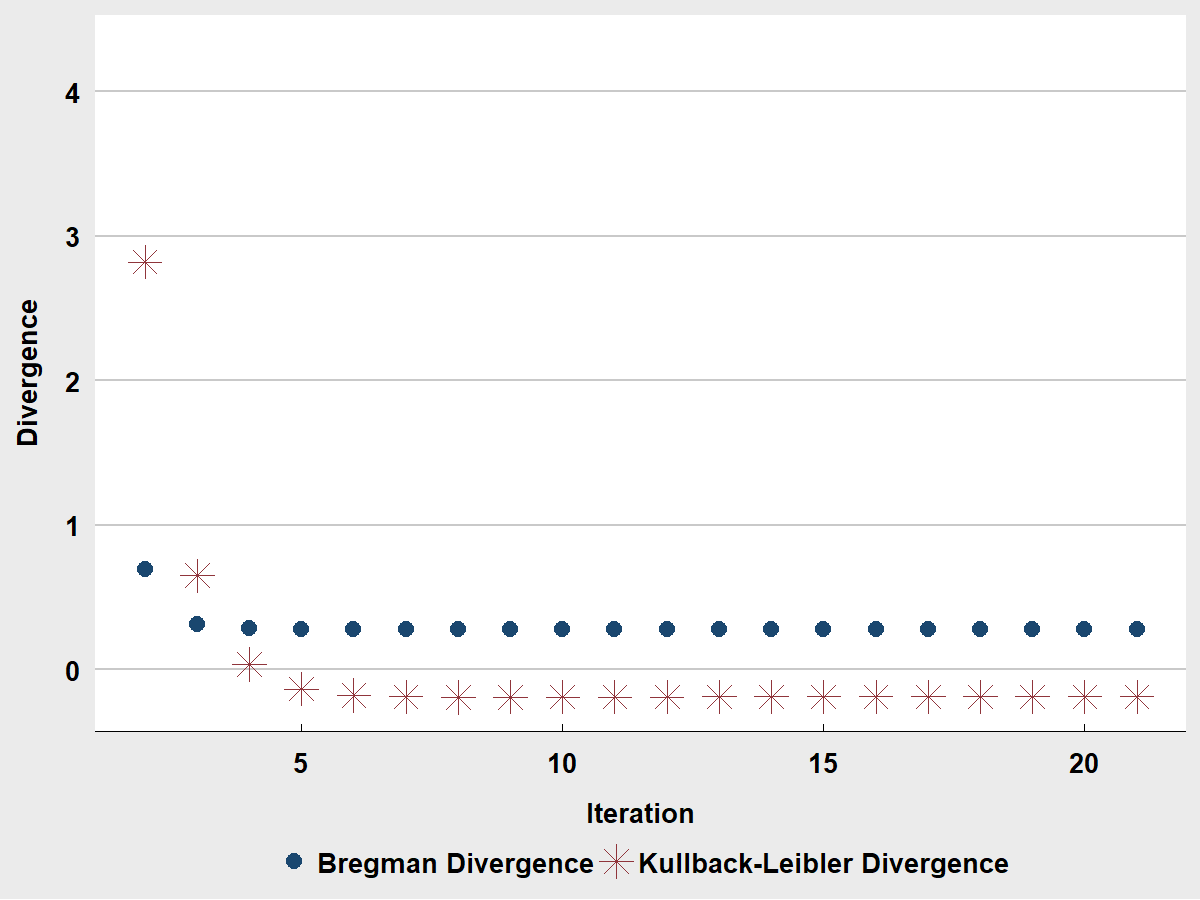}
\caption{Values of Bregman and Kullback-Leibler divergence until the convergence of GIS($\gamma$) was achieved  for the model for intensities in Example \ref{ExampleForPlots}.}
\label{Ex41intens}
\end{minipage} \hfill
\begin{minipage}[t]{0.45\textwidth}
\centering
\includegraphics[scale=0.25]{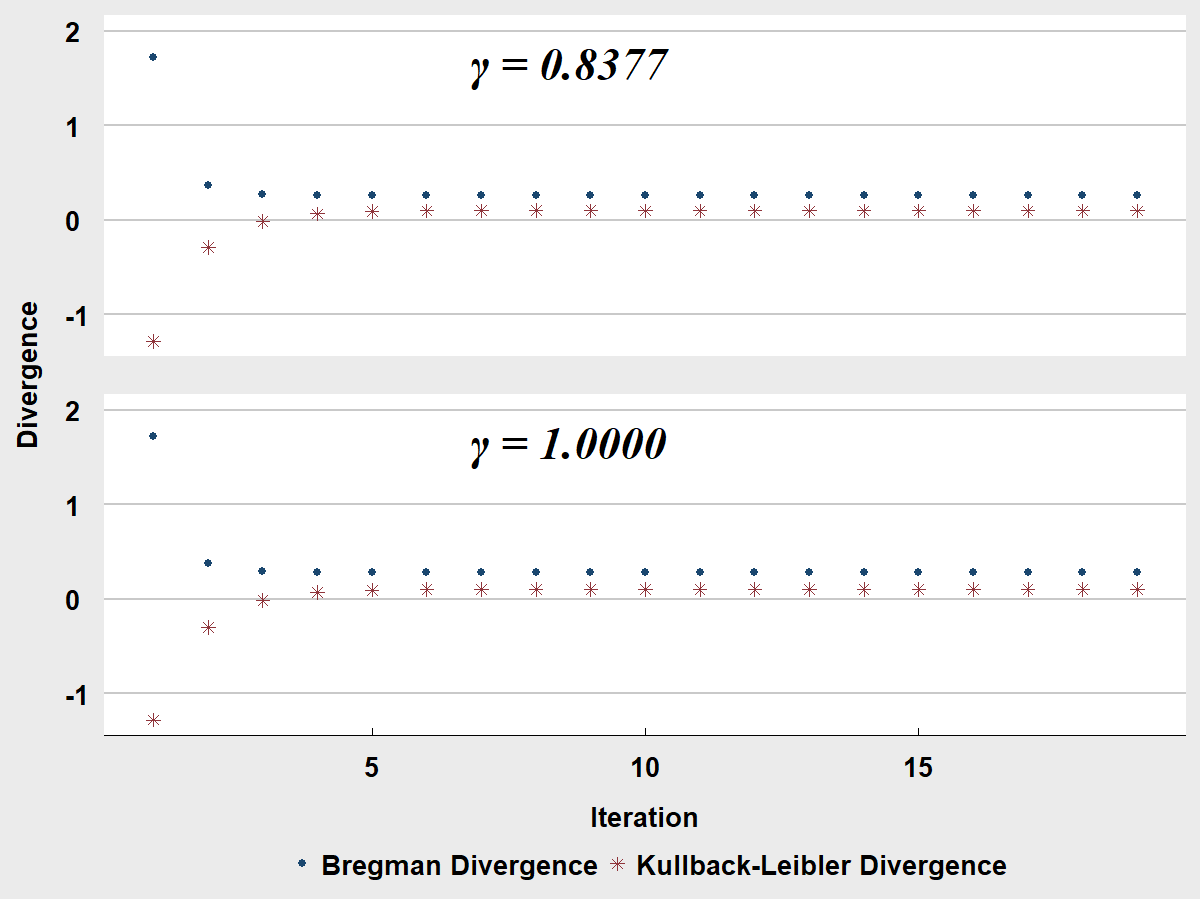}
\caption{Values of Bregman and Kullback-Leibler divergence until the convergence of GIS($\gamma$), with adjustment, was reached under the model for probabilities in Example \ref{ExampleForPlots}.}
\label{Ex41prob}
\end{minipage}
\end{figure}

The GIS ($\gamma$) algorithm  can also be applied to compute the MLE under and an affine general log-linear model which is defined as,
\begin{equation} \label{PMmatrA}
\mathcal{G}(\mathbf{A}) = \left \{ \boldsymbol \delta \in \mathcal{P}: \,\, \mbox{log } \boldsymbol \delta = \mathbf{A}'\boldsymbol \beta + \xi, \, \mbox{for some } \, \boldsymbol \beta, \boldsymbol \xi  \in \mathbb{R}^J \right\}.
\end{equation}
Its dual representation is also derived from a matrix, $\mathbf{D}$, whose rows are a basis of $Ker(\mathbf{A})$, except for that, in the affine case, the log-odds ratios are allowed to take non-zero values:
\begin{equation}\label{dualF2dA}
\mathcal{G}(\mathbf{A}) = \{ {\boldsymbol \delta} \in \mathcal{P}: \,\, \mathbf{D} \mbox{log }\boldsymbol \delta = \boldsymbol \psi\}.
\end{equation}
To illustrate, Example \ref{ExampleForPlots} is revisited next.
\begin{example}\label{ExampleMixed}
Consider the log-linear model generated by the matrix $\mathbf{A}$ equal to:
$$\mathbf{A} = \left( \begin{array}{cccc}
                                   1 & 0 & 3 & 2\\
                                   1 & 3 & 0 & 2\\
                                   \end{array} \right).
                                 $$
and with a kernel basis matrix:                         
$$\mathbf{D} = \left( \begin{array}{rrrr}
                                   2 & 0 & 0 & -1\\
                                   1 & -1 & -1 & 1\\
                                   \end{array} \right).
$$

Suppose that the data $\boldsymbol y$ come from $Mult(N, \boldsymbol p)$ and that the MLE under the model for probabilities $\mathbf{D} \log \boldsymbol p = \mathbf{D}  \log(1/12, 9/8)\tr$ is of interest.  So, one has to compute a vector, say ${\boldsymbol r} > \boldsymbol 0$, such as: 
$$\mathbf{A} \boldsymbol r = \gamma_{q} \mathbf{A} (\boldsymbol y/N),\quad \mathbf{D}\log {\boldsymbol r} = \log(1/12, 9/8)\tr, \quad  \boldsymbol 1\tr  \boldsymbol r = 1.$$
Then, by uniqueness of the MLE,  $ \boldsymbol r = \hat{ \boldsymbol p}$.

In fact, for this particular choice of the affine model, the MLE has closed form \citep[cf.][]{SturmfelsDuarteMLE2021}. Set $z_1 = y_1 + y_2+ 2y_3 + 2y_4$,   $z_2 = y_1 + 3y_3 + 2y_4$,  $z_3 = y_1 + 3y_2+ 2y_4$,  $z_4 = y_1 +2 y_2+ y_3 + 2y_4$, and take
\begin{equation}\label{mleWith27}
\boldsymbol r  =
 \left (\frac{2z_2z_3}{3z_1z_4}, \frac{4z_3^3}{27z_1 z_4^2}, \frac{4z_2^3}{27z_1^2z_4},  \frac{z_2^2 z_3^2}{27z_1^2 z_4^2}\right)\tr.
 \end{equation}  
It is straightforward to check that $\boldsymbol 1\tr  \boldsymbol r = 1$ and, because $8 r_1r_4 = 9 r_2 r_3$, $r_1^2 = 12 r_4$, that is, $\mathbf{D}\log {\boldsymbol r} = \log(1/12, 9/8)\tr$, so  $\boldsymbol r$ is a probability distribution in the model of interest.
It remains to show that $\mathbf{A} \boldsymbol r = \gamma_{y} \mathbf{A} (\boldsymbol y/N)$.  Indeed,
    \begin{align*}
   r_1 + 3r_3 + 2r_4 &= \frac{2z_2z_3}{3z_1z_4} + 3 \frac{4z_2^3}{27z_1^2z_4} + 2 \frac{z_2^2 z_3^2}{27z_1^2 z_4^2} = \left (\frac{2z_3}{3z_1z_4} +3 \frac{4z_2^2}{27z_1^2z_4}+2 \frac{z_2 z_3^2}{27z_1^2 z_4^2}\right) z_2 = \gamma_1  \frac{z_2}{N},\\
  r_1 + 3r_2 + 2r_4 &=  \frac{2z_2z_3}{3z_1z_4} + 3 \frac{4z_3^3}{27z_1z_4^2} + 2 \frac{z_2^2 z_3^2}{27z_1^2 z_4^2} = \left (\frac{2z_2}{3z_1z_4} +3 \frac{4z_3^2}{27z_1z_4^2}+2 \frac{z_2^2 z_3}{27z_1^2 z_4^2}\right) z_3 = \gamma_2  \frac{z_3}{N}, 
 \end{align*}    
where
\begin{align*}
\gamma_1 = \gamma_2 &=\frac{N}{27z_1^2 z_4^2} (32z_1^3 + 144z_1^2z_2 + 144z_1^2z_3 + 192 z_1^2z_4 + 216 z_1z_2^2 + 432 z_1z_2z_3 + 576 z_1z_2z_4 \\ 
                                               &+  216 z_1z_3^2 + 576z_1z_3z_4 + 384 z_1z_4^2 + 108 z_2^3 + 324 z_2^2 z_3 + 432 z_2^2 z_4 + 324 z_2z_3^2 \\
                                               &+ 864 z_2z_3z_4 + 576 z_2z_4^2 + 108 z_3^3 + 432z_3^2z_4 + 576 z_3z_4^2 + 256 z_4^3),
\end{align*}
implying that $\mathbf{A} \boldsymbol r = \gamma_{y} \mathbf{A} (\boldsymbol y/N)$, and thus, $ \boldsymbol r = \hat{ \boldsymbol p}$.

Let the observed data be $\boldsymbol y = (1, 2, 3, 4)\tr$.  In order to commence GIS($\gamma$) for an affine model, one can start with any vector with integer entries which satisfy the desired multiplicative structure. Here, one can, for example, use $\boldsymbol r = (6, 4, 4, 3)\tr$.

After rounding to four decimal places either the closed-form solution or the its approximation obtained from GIS($\gamma$),  $\boldsymbol p \approx  (0.6618, 0.1149, 0.1869, 0.0365)\tr$ and $\gamma \approx 0.7196$. During the numerical fitting,  the adjustment step was performed 133 times,  the convergence of each core step was reached in 53 iterations.\qed
\end{example}

The proposed framework will now be used to analyze the data obtained from a ongoing clinical study  \citep{DIAVacc}.  The participants were vaccinated  against the COVID-19 disease using an mRNA vaccine and have been monitored since then, with the primary focus on specific immune response rates to vaccination.  The study participants who did not respond to the primary vaccination, were revaccinated, \textit{boostered}, after six month, and those who did not respond to the first booster, were revaccinated one more time. Therefore, the (re-) vaccination design can be described using the tree in  Figure \ref{TreeVaccine}.  Denoted by $Y_1$, $Y_2$, and $Y_3$ are the random variables corresponding to  the responses to primary vaccination, and the first and second revaccinations, respectively. Of special interest is the data of kidney transplant recipients whose rate of positive response to the initial vaccination was the lowest among the study groups. The possibility of delayed response in these patients seems plausible, and, therefore, in this paper, we will test the hypothesis of independence of immune response in the course of repeated vaccination, which can be expressed using the model in Example \ref{ExampleVaccine}:
$$p_{1} = \theta_0^3, \,\,  p_{2} = \theta_0^2 \theta_1, \,\, p_{3} = \theta_0\theta_1, \,\, p_{4} = \theta_1,$$
where $\theta_1$ denotes the probability of positive immune response, and $\theta_0= 1-\theta_1$ is the probability of non-response.  Since the number of participants was fixed at the beginning of the study, the multinomial sampling scheme can be assumed, and, for this case, the MLE under this model has a closed form and was computed in Example \ref{ExampleVaccine}. 
Given the observed frequency distribution $\boldsymbol y = (80, 12, 44, 64)\tr$,  the MLE is equal to $\hat{\boldsymbol p}$ = $((308/428)^3$, $308^2 \cdot 120/428^3$, $308 \cdot 120/428^2$, $120/428)\tr$. In this case, Pearson statistic $X^2 = 11.85$  and deviance $G^2 = 14.65$ on two degrees of freedom, which indicates a medium-strength evidence against the independence of immune response due to repeated vaccination. The MLE computation was also performed using the GIS($\gamma$) algorithm.  Three adjustment steps with 59 iterations each were needed to achieve the convergence within four decimal places.  Upon convergence, $\hat{\boldsymbol p}\approx (0.3727, 0.1452, 0.2018, 0.2804)\tr$, and $\hat{\gamma} \approx 1.0455$, both being sufficiently close to their values derived from the closed form. 

The dual representation of this model can be obtained by setting two odds ratios to unity, namely, $p_1 p_3 p_4/p_2^2 = 1$ and $p_2 p_4 = p_3^2$ \citep{KRD11}.  For the purpose of statistical power analysis, one may wish to investigate a probability distribution at least one of whose odds ratios slightly deviate from $1$, for example, a probability distribution $\tilde{\boldsymbol p} \in \mathcal{P}$ which satisfies $\log \tilde{\boldsymbol p} = \mathbf{A}\boldsymbol \beta + \beta_0$, that is $p_1 p_3 p_4/p_2^2 = 1$ and $p_2 p_4 = \phi p_3^2$, for some $\phi > 0$. Because the investigation of whether or not there is a closed-form solution for $\tilde{\boldsymbol p}$ is not trivial,  one can simply carry out the computation using GIS($\gamma$), initiating it from a $\boldsymbol \delta$ which meets the given odds ratio constraints. For instance, after setting $\phi = 2$ one obtains $\tilde{\boldsymbol p} = (0.3847, 0.1376, 0.1501, 0.3277)$ which satisfies, up to a precision, the aforementioned  odds ratio constraints and has $\mathbf{A} \tilde{\boldsymbol p} = \tilde{\gamma} \mathbf{A} ({\boldsymbol y}/\boldsymbol 1\tr \boldsymbol y)$, where $\tilde{\gamma} = 1.0255$. This distribution can be used further to estimate the test power to reject the hypothesis of independence.

\section{Conclusion}

The paper reviews several aspects  of the maximum likelihood estimation under general log-linear models that are specified using a non-negative integer design matrix, mainly focusing on the models which do not have the genuine overall effect. In such models, the sampling scheme has a profound effect on the MLE properties, and, in particular, on how the MLE is calculated. Although the Generalized Iterative Scaling (GIS) of \cite{DarrochRatcliff} would be a natural choice for the MLE computation for non-negative design matrices with integer entries,  the original proof of convergence of this algorithm was given only for models with the overall effect. In this paper we prove that, in fact, the algorithm does converge in the no-overall-effect case, and is sufficient for the Poisson context. An extended version of GIS which can be used to compute the MLE in the multinomial case is proposed as well.  Several examples are provided to illustrate the main points and results. Computations were performed using the R Environment for statistical computing \citep{Rcite}.

\section*{Acknowledgements}

The authors are grateful to Tam\'{a}s Rudas for his insightful comments during the preparation of this manuscript. 
The data set is courtesy of Christian Hugo and Julian Stumpf, from the University Clinic Carl-Gustav Carus, Technical University Dresden. The authors are indebted to Ren\'{e} Mauer for the data processing.

\clearpage
 
\section*{Appendix}


\vspace{4mm}

\noindent \textbf{Proof of Lemma \ref{LemmaBregmanInequality}}

\vspace{3mm}

\begin{proof}
 \begin{align*}
 \mathcal{D}(\boldsymbol z, \boldsymbol \delta^{(n+1)}) &= \sum_{i} z_i \log z_i -  \sum_{i} z_i \log \delta_i^{(n +1)} - (\sum_{i} z_i  - \sum_{i} \delta_i^{(n +1)}) \\
 &\leq \sum_{i} z_i \log z_i -  \sum_{i} z_i \log \delta_i^{(n +1)} - \sum_{i} z_i  + \sum_{i =1}^I\delta_i^{(n)} +  \sum_{j = 1}^J   {A}_{j} \boldsymbol{q} - \sum_{j = 1}^J A_j \boldsymbol \delta^{(n)}  \\
 &= \sum_{i} z_i \log z_i  - \sum_{i} z_i \log  \delta_i^{(n)} \prod_{j = 1}^J \left[\frac{{A}_{j} \boldsymbol{q}}{{A}_{j} \boldsymbol{\delta}^{(n)}}\right]^{a_{ji}} \\
 &- \sum_{i} z_i  + \sum_{i =1}^I\delta_i^{(n)} +  \sum_{j = 1}^J   {A}_{j} \boldsymbol{q} - \sum_{j = 1}^J A_j \boldsymbol \delta^{(n)}\\
 &= \sum_{i} z_i \log z_i  - \sum_{i} z_i \log  \delta_i^{(n)} -   \sum_{i} z_i \sum_{j = 1}^J{a_{ji}}  \log \left[\frac{{A}_{j} \boldsymbol{q}}{{A}_{j} \boldsymbol{\delta}^{(n)}}\right] \\
 &- \sum_{i} z_i  + \sum_{i =1}^I\delta_i^{(n)} +  \sum_{j = 1}^J   {A}_{j} \boldsymbol{q} - \sum_{j = 1}^J A_j \boldsymbol \delta^{(n)}\\
 &= \sum_{i} z_i \log (z_i/\delta_i^{(n)}) - (\sum_{i} z_i  - \sum_{i =1}^I\delta_i^{(n)})-   \sum_{i} z_i \sum_{j = 1}^J{a_{ji}}  \log \left[\frac{{A}_{j} \boldsymbol{q}}{{A}_{j} \boldsymbol{\delta}^{(n)}}\right] \\
 &+  \sum_{j = 1}^J   {A}_{j} \boldsymbol{q} - \sum_{j = 1}^J A_j \boldsymbol \delta^{(n)}\\
 &= \mathcal{D}(\boldsymbol z, \boldsymbol \delta^{(n)}) - 
 \sum_{j = 1}^J   \log \left[\frac{{A}_{j} \boldsymbol{q}}{{A}_{j} \boldsymbol{\delta}^{(n)}}\right]\sum_{i} z_i{a_{ji}} + \sum_{j = 1}^J   {A}_{j} \boldsymbol{q} - \sum_{j = 1}^J A_j \boldsymbol \delta^{(n)}  \\
 & = \mathcal{D}(\boldsymbol z, \boldsymbol \delta^{(n)}) - 
 \sum_{j = 1}^J  A_j \boldsymbol z \cdot \log \left[\frac{{A}_{j} \boldsymbol{q}}{{A}_{j} \boldsymbol{\delta}^{(n)}}\right] + \sum_{j = 1}^J   {A}_{j} \boldsymbol{q} - \sum_{j = 1}^J A_j \boldsymbol \delta^{(n)} .
  \end{align*}
 
 Therefore, after taking $\boldsymbol z \equiv \boldsymbol q$,
 $$\mathcal{D}(\boldsymbol q, \boldsymbol \delta^{(n+1)}) \leq  \mathcal{D}(\boldsymbol q, \boldsymbol \delta^{(n)}) - \mathcal{D}(\mathbf{A}\boldsymbol q, \mathbf{A}\boldsymbol \delta^{(n)}).$$
 
\end{proof}

\bibliography{IPFmanuscript20221230arxiv.bib}
\bibliographystyle{apacite}

\end{document}